\documentclass[11pt]{article}

\usepackage[a4paper,margin=1in]{geometry}
\usepackage{amsmath,amssymb,amsthm,bm}
\usepackage{booktabs}
\usepackage{array}
\usepackage{graphicx}
\usepackage{hyperref}
\usepackage{tikz}
\usepackage{enumitem}
\usetikzlibrary{arrows.meta,positioning}

\hypersetup{colorlinks=true,linkcolor=blue,citecolor=blue,urlcolor=blue}

\newcommand{\ket}[1]{|#1\rangle}
\newcommand{\bra}[1]{\langle #1|}
\newcommand{\Enc}{\operatorname{Enc}}
\newcommand{\Good}{\mathcal G}
\newcommand{\Key}{\mathcal K}
\newcommand{\Data}{\mathcal D}

\newcommand{\Prb}{\mathbb P}

\newtheorem{assumption}{Assumption}
\newtheorem{lemma}{Lemma}
\newtheorem{proposition}{Proposition}
\newtheorem{corollary}{Corollary}

\title{Cipher-Structure-Aware Variational Quantum Cryptanalysis:\\
A Reversible Public-Diagnostic Framework}

\author{Xi Li\\
\small Henan University, Kaifeng, China}

\date{}

\begin{document}
\maketitle

\begin{abstract}
We introduce a cipher-structure-aware variational framework for key recovery from public data in reduced block ciphers. Public S-box differential constraints and candidate-key-dependent inverse-round residuals are converted into phase separators through reversible compute--phase--uncompute circuits. The principal separator is a hybrid DDT construction that combines pairwise and synchronized multi-pair differential constraints before residual refinement. The construction is specified by public functions and does not require a key-indexed phase table in the physical circuit. We formalize the public-instance model, prove one-sided completeness of the structural diagnostics, and derive a conditional lower-tail bound for false-key diagnostic gaps under an explicitly stated sub-Gaussian moment condition. This statistical statement concerns the diagnostic landscape, not convergence of a finite-depth QAOA optimizer. On ten independent random three-pair instances of 16-bit simplified AES, an 18-layer Hybrid-DDT schedule reaches a mean public-consistent-key probability of $0.529785$, with values from $0.388892$ to $0.770937$, ranks the correct key first in every run, and has largest observed false-key probability $0.008723$. The result is a reduced-cipher mechanism study, not an attack on full AES.
\end{abstract}

\section{Introduction}

Given public plaintext--ciphertext pairs, symmetric-key recovery can be formulated as a reversible public-verification problem. A candidate key is accepted when the reversible cipher reproduces the published ciphertexts. Quantum analyses of symmetric-key primitives have emphasized the cost of reversible verification, while recent variational studies have explored heuristic state preparation for cryptographic search problems~\cite{BonnetainAES,WangVQAA,AizpuruaVQAA,AizpuruaTN}. The construction uses public round structure to shape the candidate-key register before this final test: differential properties of the S-box and candidate-key-dependent inverse checkpoints supply intermediate phase information.

Variational quantum algorithms and alternating-operator ansatze provide a setting in which these diagnostics can be interleaved with key-register mixing~\cite{FarhiQAOA,HadfieldAOA,CerezoVQA}. Three objects must be distinguished: the public diagnostic, the reversible circuit that evaluates it, and the final public-consistency test. A diagnostic phase is meaningful only when it is computable from public data in superposition and when the reported overlap is defined on the candidate-key register.

We instantiate this approach for a reduced AES-like block cipher. The central object is a public diagnostic
\begin{equation}
    f_r:\Key\longrightarrow\mathcal Z_r,
    \qquad e_r(k)=h_r(f_r(k)),
    \label{eq:intro-diagnostic}
\end{equation}
where $f_r$ is computed from a candidate key and public plaintext--ciphertext pairs, and $h_r$ maps the diagnostic code to a phase energy. The diagnostic is not a correctness predicate; it is an intermediate phase function used to prepare the key register for final public verification.

The contribution is fivefold.
\begin{enumerate}[leftmargin=*]
    \item We specify a public-instance model that separates synthetic-key generation from the data available to the recovery procedure and defines the recovery metric on the key register.
    \item We construct a hybrid S-box DDT separator from pairwise and synchronized joint components, together with candidate-key-dependent inverse-round residual separators. The DDT construction follows a forward-candidate path and does not use a tautological ciphertext pullback.
    \item We prove one-sided diagnostic completeness and a conditional lower-tail bound for false keys. The analysis quantifies diagnostic selectivity without asserting convergence of finite-depth QAOA.
    \item We give the reversible primitive $U_f^{\dagger}P_fU_f$ and distinguish its physical circuit realization from the compressed 16-bit statevector replay used for numerical auditing.
    \item We report a ten-instance Hybrid-DDT S-AES benchmark, a pairwise public-pair reference, a no-NPP pairwise/Hybrid ablation, and a gate-level S-DES validation.
\end{enumerate}

The experiments concern a reduced cipher and should be read as a mechanism study. They do not establish key recovery for full AES. The diagnostic assumptions are stated in a form that can be tested on other reduced-cipher ensembles.

\section{Public Instance and Evaluation Model}
\label{sec:model}

\subsection{Instance generation}

Let the key space be
\begin{equation}
    \Key=\{0,1\}^{n},
\end{equation}
and let $\Enc_k$ be a public block-cipher implementation with block length $b$. A public-instance ensemble is specified by a distribution for the plaintext vector $\mathbf P=(P_1,\ldots,P_m)$. A synthetic instance first samples a secret key
\begin{equation}
    K\leftarrow\operatorname{Unif}(\Key),
    \label{eq:key-sampling}
\end{equation}
and then chooses public plaintexts independently of $K$. The canonical random-text model takes
\begin{equation}
    P_j\stackrel{\mathrm{i.i.d.}}{\sim}\operatorname{Unif}(\{0,1\}^{b}),
    \qquad P_j\mathrel{\perp} K,
    \label{eq:plaintext-sampling}
\end{equation}
although the analysis also applies after conditioning on a fixed public plaintext design. The experiment publishes
\begin{equation}
    C_j=\Enc_K(P_j),
    \qquad
    \Data=\{(P_j,C_j)\}_{j=1}^{m}.
    \label{eq:public-data}
\end{equation}
After generation, $K$ is not supplied to the variational procedure. It is used only by the experimenter to create the public ciphertexts and to audit the result.

The public-consistent key set is
\begin{equation}
    \Good(\Data)=
    \{k\in\Key:\Enc_k(P_j)=C_j\text{ for all }j\}.
    \label{eq:good-set}
\end{equation}
By construction, $K\in\Good(\Data)$, so this set is nonempty. It can contain more than one key for a reduced cipher or for too few public pairs. The algorithm therefore targets the set $\Good$, not an inaccessible label for one particular key.

When $\Key\setminus\Good(\Data)$ is nonempty, an error key is drawn for analysis as
\begin{equation}
    \widetilde K\mid\Data
    \leftarrow\operatorname{Unif}\bigl(\Key\setminus\Good(\Data)\bigr).
    \label{eq:false-key-sampling}
\end{equation}
This conditional distribution is used only to state and test the statistical model; it is not available to the recovery procedure.
The ensemble statements below are restricted to instances for which $\Key\setminus\Good(\Data)$ is nonempty almost surely; equivalently, one may condition the ensemble on this event.

\subsection{Key-register objective}

For a prepared key-register state
\begin{equation}
    \ket{\psi(\Theta)}=\sum_{k\in\Key}\alpha_k(\Theta)\ket{k},
\end{equation}
we define
\begin{equation}
    p_{\Good}(\Theta)=
    \sum_{k\in\Good}|\alpha_k(\Theta)|^2.
    \label{eq:pgood-qip}
\end{equation}
When $\Good=\{K\}$, this is the true-key probability. Otherwise it is the probability of obtaining a key that passes all public pairs. It is not a ciphertext-register overlap.

The recovery procedure consists of three stages:
\begin{enumerate}[leftmargin=*]
    \item prepare $\ket{+}^{\otimes n}$ on the candidate-key register;
    \item apply public structural phase separators and key-register mixers;
    \item measure the key register and verify the observed candidates with the public equations in Eq.~\eqref{eq:good-set}.
\end{enumerate}

The training objective uses only the public pairs and the candidate-key state; it does not use the hidden generator key $K$ or an exact-key label. In the physical construction, a ciphertext-space projector can be estimated from the public ciphertext basis states. In the compressed replay, the deterministic encryption computation is pulled back to a key-register mask, which is mathematically equivalent to the public-consistency projector and is used only as a simulation shortcut. The set $\Good$ is therefore a public evaluation set, not hidden side information. Exact summation over the reduced key register removes shot noise from the numerical comparison.

\section{Cipher-Structure-Aware Phase Separators}
\label{sec:diagnostics}

We instantiate the construction on the 16-bit simplified AES (S-AES) structure~\cite{NISTAES,DaemenRijmen,Stallings}. The state consists of four 4-bit nibbles. The round function contains a public 4-bit S-box, a ShiftRows-like permutation, MixColumns over $\mathrm{GF}(2^4)$, and XOR with round keys. The key schedule expands a 16-bit master key into $K_0,K_1,K_2$. The encryption equations are
\begin{align}
    S_0&=P\oplus K_0,\\
    S_1&=\operatorname{MC}(\operatorname{SR}(\operatorname{SN}(S_0)))\oplus K_1,\\
    C&=\operatorname{SR}(\operatorname{SN}(S_1))\oplus K_2.
    \label{eq:saes-qip}
\end{align}
The S-box is the public permutation
\begin{equation}
    S=(9,4,A,B,D,1,8,5,6,2,0,3,C,E,F,7).
\end{equation}

\subsection{Forward-candidate DDT separator}

For two public pairs $(P_a,C_a)$ and $(P_b,C_b)$, the final AddRoundKey cancels in the ciphertext difference. The public final-round S-box output difference is therefore
\begin{equation}
    \Delta y_{ab}=\operatorname{SR}^{-1}(C_a\oplus C_b).
    \label{eq:public-dy}
\end{equation}
For a candidate key $k$, the forward prefix computes
\begin{equation}
    X_j^F(k)=
    \operatorname{MC}(\operatorname{SR}(\operatorname{SN}(P_j\oplus K_0(k))))\oplus K_1(k),
    \label{eq:forward-prefix}
\end{equation}
which is the input to the final S-box. The candidate input-difference word is
\begin{equation}
    \Delta x_{ab}(k)=X_a^F(k)\oplus X_b^F(k).
    \label{eq:candidate-dx}
\end{equation}
For nibble $q$, write $u_{ab,q}(k)=[\Delta x_{ab}(k)]_q$ and $v_{ab,q}=[\Delta y_{ab}]_q$ for the corresponding four-bit components.
This forward-candidate construction evaluates the candidate input difference from the encryption prefix. It does not define the differential test by pulling the public ciphertext difference through a candidate-dependent inverse map.

Let
\begin{equation}
    N_{uv}=
    \left|\left\{x\in\{0,1\}^{4}:
    S(x)\oplus S(x\oplus u)=v\right\}\right|
\end{equation}
denote the public S-box DDT count. This is the standard difference-count object used in differential cryptanalysis~\cite{BihamShamir1993,Nyberg1994}. For a nibble with candidate input difference $u$ and public output difference $v$, we use
\begin{equation}
 h_{\mathrm{DDT}}(u,v)=
\begin{cases}
-\log(N_{uv}/16),&N_{uv}>0,\\
\zeta,&N_{uv}=0.
\end{cases}
\label{eq:ddt-energy-qip}
\end{equation}
Here $\log$ denotes the natural logarithm. The zero-count penalty $\zeta$ is finite in the numerical implementation. The code standardizes each component over this finite local score alphabet before optimization; this is a positive affine reparameterization of the component energy, absorbed by its variational angle up to a key-independent phase. The principal separator is hybrid: it retains the selected pairwise nibble components and adds synchronized multi-pair components. For three public pairs, the implementation uses twelve pairwise nibble components and four joint nibble components, for sixteen DDT components in total. The joint terms are intended to reduce differential-equivalent false peaks; neither the pairwise nor joint DDT support condition replaces exact verification.

For a synchronized joint component, the ordinary count is replaced by the public joint count
\begin{equation}
    N^{(J)}_{u_1,u_2;v_1,v_2}
    =
    \left|\left\{x\in\{0,1\}^{4}:
    \begin{array}{l}
    S(x)\oplus S(x\oplus u_1)=v_1,\\
    S(x)\oplus S(x\oplus u_2)=v_2
    \end{array}\right\}\right|,
\end{equation}
with the same score map after replacing $N_{uv}$ by $N^{(J)}_{u_1,u_2;v_1,v_2}$. Both counts are over the fixed 16-element S-box domain; they are public local alphabets, not key-space tables.
In the three-pair construction, $(u_1,v_1)$ and $(u_2,v_2)$ are the pairwise transitions relative to a common reference pair, while the third pairwise difference is retained in the ordinary pairwise bank. Thus the joint bank couples the public differences without introducing a key-space lookup table.

\subsection{Inverse-round residual separator}

For a selected checkpoint $t$, let $R_{t,k}$ be a candidate-key-dependent inverse diagnostic map on the relevant cipher-state space. It is built from the public S-AES modules and the round keys derived from $k$. The synchronized residual is
\begin{equation}
    \delta_{j,t}(k)=
    R_{t,k}(\Enc_k(P_j))\oplus R_{t,k}(C_j).
    \label{eq:residual-qip}
\end{equation}
For a public-consistent key, the two arguments coincide and $\delta_{j,t}(k)=0$. The selected maps include the inverse final S-box, the complete inverse final round, and a two-round inverse checkpoint. The residual phase may use nibblewise Hamming weights or a normalized smooth energy. Because the maps depend on the candidate round keys, they cannot be replaced by a fixed public intermediate state.

If $R_{t,k}$ is injective on the relevant state space, then $\delta_{j,t}(k)=0$ is equivalent to $\Enc_k(P_j)=C_j$ for that branch. The inverse checkpoints used here are reversible for each fixed $k$, so this equivalence holds for the stated maps. In that case the zero residual is a re-encoded verification discrepancy, whereas its nonzero Hamming profile can still provide a useful phase landscape. If a diagnostic map is partial or noninjective, only the one-sided implication needed below is retained. Thus the residual is not treated as an independent correctness theorem; exact public verification remains the final decision rule.

\subsection{Optional residual-shell aggregation}

An optional global residual Hamming-distance term aggregates the selected residual words. If $d_{j,t}(k)$ is the Hamming distance of a residual word, the aggregate is
\begin{equation}
    d(k)=\sum_{j,t}d_{j,t}(k).
    \label{eq:aggregate-distance}
\end{equation}
The exact-zero shell is assigned the strongest attractive phase, near shells receive a weaker phase, and distant shells are penalized. This term is optional and is not required for the DDT--residual construction. No number-partitioning oracle is assumed by the core method.

\section{Reversible Circuit Construction}
\label{sec:physical}

\subsection{Diagnostic sandwich}

Each public diagnostic is implemented by a compute--phase--uncompute circuit. For a diagnostic $f_r$ and energy map $h_r$, define a reversible computation
\begin{equation}
    U_{f_r}\ket{k}\ket{0}=\ket{k}\ket{f_r(k)}.
\end{equation}
The phase operation acts only on the diagnostic workspace:
\begin{equation}
\ket{k}\ket{0}
\xrightarrow{U_{f_r}}
\ket{k}\ket{f_r(k)}
\xrightarrow{P_{h_r}}
e^{-i\gamma h_r(f_r(k))}\ket{k}\ket{f_r(k)}
\xrightarrow{U_{f_r}^{\dagger}}
e^{-i\gamma h_r(f_r(k))}\ket{k}\ket{0}.
\label{eq:sandwich-qip}
\end{equation}
The induced key-register operator is consequently
\begin{equation}
    U_r(\gamma)=
    \sum_{k\in\Key}e^{-i\gamma e_r(k)}\ket{k}\bra{k}.
    \label{eq:induced-diagonal}
\end{equation}
Equation~\eqref{eq:induced-diagonal} is a consequence of reversible computation; it is not a prescription to load a classical array of $2^n$ phases.

The DDT circuit computes the key schedule and the forward prefix in Eq.~\eqref{eq:forward-prefix}, obtains the candidate difference, compares it with a public output difference, applies a phase controlled by a constant-size DDT code, and uncomputes all work registers. The residual circuit computes candidate encryption branches, applies the same inverse checkpoint to the candidate and public branches, computes residual bits or counters, applies the phase, and uncomputes. The S-box is a fixed small permutation, while the linear layers are reversible maps over $\mathrm{GF}(2^4)$. This compute--phase--uncompute pattern follows the standard logical-reversibility model and can be decomposed into elementary quantum gates~\cite{Bennett1973,Barenco1995}.

\subsection{Staged QAOA}

Let $\mathcal R_\ell$ be the diagnostic components used at layer $\ell$. The key-register state is
\begin{equation}
    \ket{\psi(\Theta)}=
    \prod_{\ell=1}^{p}
    U_M(\beta_\ell,b_\ell)
    \exp\left[-i\sum_{r\in\mathcal R_\ell}
        \gamma_{\ell r}H_{\ell r}\right]
    \ket{+}^{\otimes n},
    \label{eq:qaoa-qip}
\end{equation}
where
\begin{equation}
    U_M(\beta,b)=
    \exp\left[-i\beta\left(\sum_iX_i+b\sum_i|1\rangle\langle1|_i\right)\right].
    \label{eq:biased-mixer}
\end{equation}
Setting $b=0$ gives the standard transverse-field mixer. Component phases in one layer are applied serially. After the diagnostic workspaces are uncomputed, the component operators are diagonal on the key register and their product is exactly the exponential of the sum in Eq.~\eqref{eq:qaoa-qip}; no intermediate measurement is used.

The product is ordered in time from right to left, so the rightmost factor acts first. The layer index is a circuit index, not a claim that all diagnostics are evaluated simultaneously. In the staged implementation, consecutive blocks may use different diagnostic families and different numbers of layers.

The principal staged schedule is
\begin{equation}
    \operatorname{DDT}^{\rm hybrid}_{12}\longrightarrow
    \operatorname{residual}_{3}\longrightarrow
    \operatorname{GHD}_{3},
    \label{eq:main-schedule-qip}
\end{equation}
where the last block is an optional residual-shell aggregation. For three public pairs, the Hybrid DDT block has twelve pairwise nibble components and four synchronized joint components. The subscript $12$ denotes twelve QAOA layers, not twelve components. The angles are optimized using a public objective, and measured candidates are verified independently.

\begin{figure}[t]
    \centering
    \includegraphics[width=0.92\linewidth]{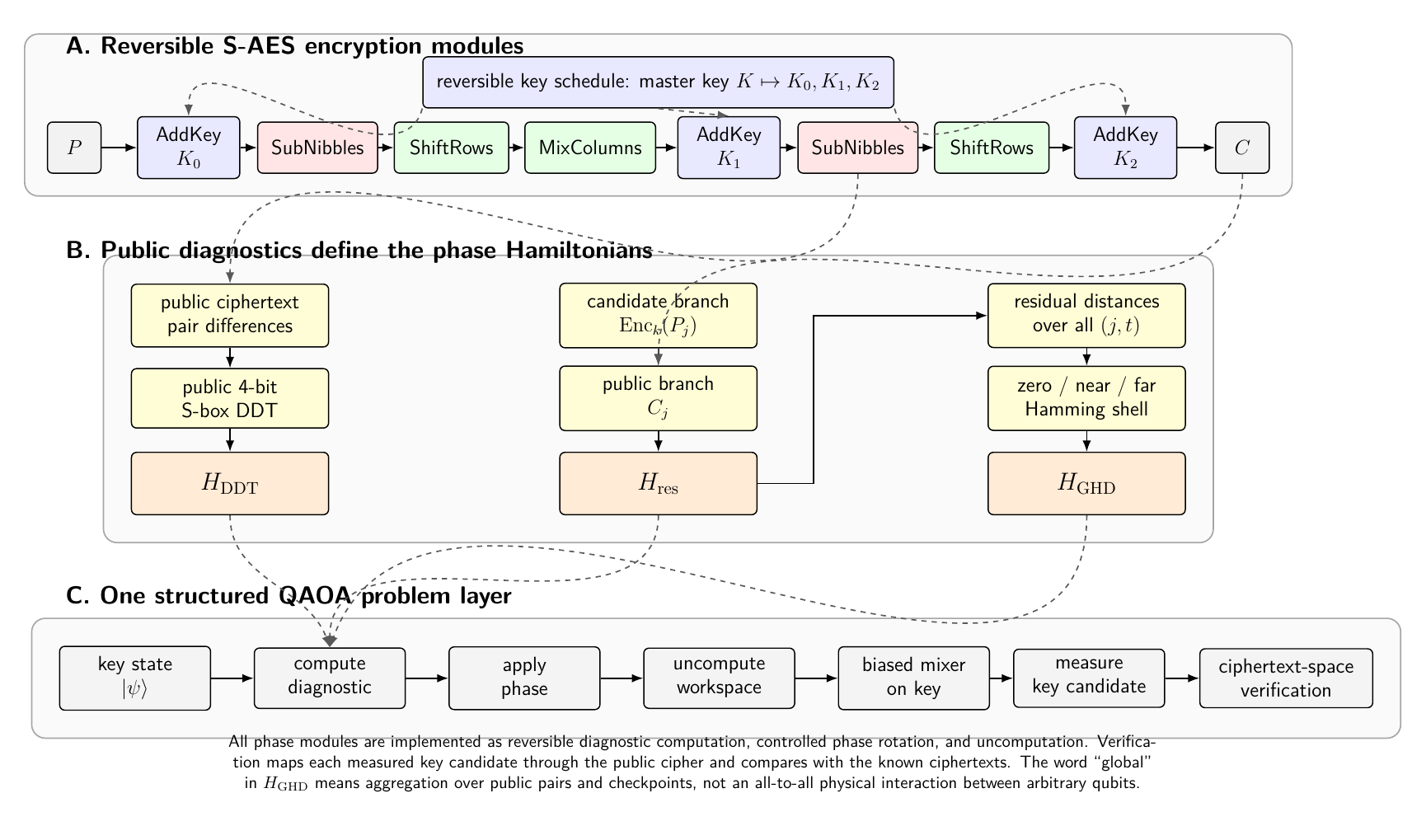}
    \caption{Structure-aware key-register construction. Cipher modules are computed reversibly from a candidate key and public data. A diagnostic phase is inserted between computation and uncomputation, after which the mixer acts only on the clean key register. The effective 16-bit replay stores the induced diagonal for auditing; the physical route uses the reversible sandwiches in Eq.~\eqref{eq:sandwich-qip}.}
    \label{fig:module-map-qip}
\end{figure}

\subsection{Numerical replay versus physical circuit}

The 16-bit numerical benchmark materializes the induced Hybrid-DDT, residual, and residual-shell arrays over all $2^{16}$ candidate keys and uses them as Qiskit diagonal phase blocks on a 16-qubit effective key register. This gives an exact statevector evaluation of the reduced key-register action and removes shot noise. It remains a classical compressed replay, not a full gate-level S-AES oracle.

The physical implementation replaces each diagonal block by Eq.~\eqref{eq:sandwich-qip}. A full statevector simulation of that route must include the key, encryption, key-schedule, diagnostic, and comparison workspaces. A bare key-plus-state verification model already requires at least 32 logical qubits, corresponding to 64 GiB in complex double precision before simulator overhead. This is a limitation of classical statevector simulation, not a requirement that a physical quantum device provide that classical memory.

\section{Deterministic Properties and Conditional Separation}
\label{sec:theory}

\subsection{Diagnostic energy and deterministic completeness}

Fix a public instance and write $e_r(k)=h_r(f_r(k))$ for the score of component $r$. Let $L$ be the number of selected components. Two notions must be kept separate. A DDT transition is \emph{feasible} when its count is nonzero, whereas the numerical score in Eq.~\eqref{eq:ddt-energy-qip} is a likelihood-like ranking energy and is generally positive even for a feasible transition. Accordingly, define the support band
\begin{equation}
\chi_r(k)=
\begin{cases}
1,&\text{if the selected DDT count for component $r$ is nonzero},\\
0,&\text{otherwise},
\end{cases}
\label{eq:ddt-support}
\end{equation}
and
\begin{equation}
\mathcal B_{\rm supp}=
\left\{k\in\Key:
\begin{array}{l}
\chi_r(k)=1\text{ for every selected DDT component }r,\\
\delta_{j,t}(k)=0\text{ for every selected residual component }(j,t)
\end{array}\right\}.
\label{eq:band-qip}
\end{equation}

\begin{lemma}[One-sided diagnostic completeness]
For the S-AES diagnostics above,
\begin{equation}
    \Good\subseteq\mathcal B_{\rm supp}.
    \label{eq:completeness-qip}
\end{equation}
\end{lemma}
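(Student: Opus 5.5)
The plan is to fix an arbitrary $k\in\Good$ and verify the two defining conditions of $\mathcal B_{\rm supp}$ in Eq.~\eqref{eq:band-qip} separately. The residual condition is immediate. Since $k\in\Good$, we have $\Enc_k(P_j)=C_j$ for every $j$, so the two arguments of $R_{t,k}$ in Eq.~\eqref{eq:residual-qip} coincide. Because $R_{t,k}$ is a well-defined function for fixed $k$, this gives $\delta_{j,t}(k)=0$ for every selected checkpoint. No injectivity is needed here; this is exactly the one-sided direction retained for partial or noninjective maps.

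For the DDT condition I would show that, for $k\in\Good$, the forward prefix $X_j^F(k)$ of Eq.~\eqref{eq:forward-prefix} is the true input to the final S-box layer in the encryption of $P_j$ under $k$. Then Eq.~\eqref{eq:saes-qip} gives
\[
C_j=\operatorname{SR}(\operatorname{SN}(X_j^F(k)))\oplus K_2(k).
\]
XORing two such equations cancels $K_2(k)$, and applying $\operatorname{SR}^{-1}$ (linear) yields
\[
\operatorname{SN}(X_a^F(k))\oplus\operatorname{SN}(X_b^F(k))=\operatorname{SR}^{-1}(C_a\oplus C_b)=\Delta y_{ab}.
\]
Since $\operatorname{SN}$ acts nibblewise, for each nibble $q$ we have $S(x)\oplus S(x\oplus u_{ab,q}(k))=v_{ab,q}$ with witness $x=[X_a^F(k)]_q$. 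So $N_{u_{ab,q}(k),v_{ab,q}}\ge 1$, and hence $\chi_r(k)=1$ for every pairwise component.

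For a synchronized joint component with common reference pair $a$ and pairs $b,c$, the same witness $x=[X_a^F(k)]_q$ works for both equations simultaneously. We have $x\oplus u_1=[X_b^F(k)]_q$ and $x\oplus u_2=[X_c^F(k)]_q$, and the two S-box output differences equal $v_1,v_2$ by the same cancellation applied to $(a,b)$ and $(a,c)$. Thus $N^{(J)}\ge 1$.

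The main step to handle carefully is the joint bank. One must confirm that the two transitions share the identical reference input nibble $x$ rather than two independent witnesses, since that sharing is precisely what makes the joint count nonzero and not merely each marginal count. Beyond this, it only remains to note that the argument is independent of $\zeta$ and of the standardization of scores, because $\chi_r$ depends only on whether the counts are nonzero. Combining the two parts gives $k\in\mathcal B_{\rm supp}$, which proves Eq.~\eqref{eq:completeness-qip}.
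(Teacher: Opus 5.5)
Your proof is correct and follows essentially the same approach as the paper. The residual condition holds because $\Enc_k(P_j)=C_j$ makes both arguments of $R_{t,k}$ coincide. The DDT conditions hold because, for $k\in\Good$, the forward prefix $X_j^F(k)$ is the genuine final S-box input, so the single nibble $[X_a^F(k)]_q$ witnesses each pairwise transition and both transitions of each joint component at once. You spell out what the paper states in one line each: the cancellation of $K_2(k)$ in the ciphertext difference and the application of $\operatorname{SR}^{-1}$.
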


\begin{proof}
Let $g\in\Good$. The final AddRoundKey cancels in each public ciphertext difference, so Eq.~\eqref{eq:public-dy} gives the actual final S-box output difference. The forward prefix computed with $g$ gives the actual final S-box input difference; the corresponding ordinary DDT entry is therefore nonzero. For a joint component, the same actual S-box input realizes the two observed transitions simultaneously, so the joint count is also nonzero. For every residual checkpoint, $\Enc_g(P_j)=C_j$, so both arguments in Eq.~\eqref{eq:residual-qip} are identical and $\delta_{j,t}(g)=0$. Therefore $g\in\mathcal B_{\rm supp}$.
\end{proof}

The converse is not claimed. Differential-equivalent and checkpoint-equivalent false keys may remain in $\mathcal B_{\rm supp}$, and a nonzero DDT count does not imply zero value of Eq.~\eqref{eq:ddt-energy-qip}. Exact public verification remains the correctness predicate.

For the statistical statements, define the total diagnostic energy and an instance-dependent reference by
\begin{equation}
    e_{\Data}(k)=\sum_{r=1}^{L}e_r(k),
    \qquad
    a_{\Data}=\max_{g\in\Good(\Data)}e_{\Data}(g),
    \qquad
    \Gamma_{\Data}(k)=e_{\Data}(k)-a_{\Data}.
    \label{eq:energy-qip}
\end{equation}
The reference $a_{\Data}$ is introduced only for analysis and is never supplied to the algorithm. Subtracting it from all key energies changes the phase by a global phase. Every public-consistent key satisfies $\Gamma_{\Data}(g)\leq0$, and at least one such key has zero reference gap. This definition remains valid when $|\Good|>1$; it avoids the invalid assumption that every consistent key minimizes every component energy.

The aggregate $e_{\Data}$ is a fixed analysis score with unit component weights. Fixed positive weights can be accommodated provided that the corresponding rescaling is included in the constants below. It is not identified with the trained QAOA phase function: the learned angles are component- and layer-dependent and may have either sign.

\subsection{Sub-Gaussian public-diagnostic model}

The statistical claim below is an ensemble statement. Let $\mathbb E_{\rm pub}$ and $\Prb_{\rm pub}$ denote expectation and probability over the key and plaintext sampling in Eqs.~\eqref{eq:plaintext-sampling} and~\eqref{eq:key-sampling}. Let $\mathbb E_{\rm ens}$ and $\Prb_{\rm ens}$ additionally include the conditional false-key sampling in Eq.~\eqref{eq:false-key-sampling}. The constants in the next assumption are properties of the diagnostic family and the chosen public-instance ensemble, not QAOA parameters. For the fixed-text benchmark, the same statement is understood conditionally on the displayed plaintext vector rather than as an i.i.d.-plaintext claim.

\begin{assumption}[Sub-Gaussian false-key gap]
There exist $\mu>0$, $\nu>0$, and $\lambda_0>0$ such that for every $0<\lambda\leq\lambda_0$,
\begin{equation}
    \mathbb E_{\rm ens}\left[
\exp\left(-\lambda\left(\Gamma_{\Data}(\widetilde K)-\mu L\right)\right)
\right]
\leq
\exp\left(\frac{\lambda^2\nu L}{2}\right).
\label{eq:mgf-qip}
\end{equation}
\end{assumption}

This is a one-sided sub-Gaussian moment condition on the accumulated false-key gap. It does not require independent diagnostic components; independence with suitable per-component bounds is only one sufficient route to it. The condition can fail when false keys systematically imitate the low-score diagnostic band. It is therefore a hypothesis to be calibrated and tested on held-out instances, not a property inferred from the numerical optimization.

\begin{proposition}[False-key low-energy tail]
Under Assumption~1, for every $\tau<\mu$,
\begin{equation}
    \Prb_{\rm ens}\left[\Gamma_{\Data}(\widetilde K)\leq\tau L\right]
    \leq
    \exp[-L I(\tau)],
    \label{eq:tail-qip}
\end{equation}
where
\begin{equation}
    I(\tau)=
    \sup_{0<\lambda\leq\lambda_0}
    \left\{\lambda(\mu-\tau)-\frac{\lambda^2\nu}{2}\right\}.
    \label{eq:rate-qip}
\end{equation}
If the unconstrained optimizer is admissible, then $I(\tau)=(\mu-\tau)^2/(2\nu)$.
\end{proposition}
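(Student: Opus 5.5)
This is a Chernoff-type argument applied to the lower tail of $\Gamma_{\Data}(\widetilde K)$, using the moment bound of Assumption~1. Fix $\tau<\mu$ and any $\lambda\in(0,\lambda_0]$. Because $\lambda>0$, the event $\{\Gamma_{\Data}(\widetilde K)\le\tau L\}$ coincides with
\[
\bigl\{-\lambda(\Gamma_{\Data}(\widetilde K)-\mu L)\ge\lambda(\mu-\tau)L\bigr\}.
\]
Markov's inequality applied to the nonnegative variable $\exp[-\lambda(\Gamma_{\Data}(\widetilde K)-\mu L)]$ then gives
\[
\Prb_{\rm ens}\bigl[\Gamma_{\Data}(\widetilde K)\le\tau L\bigr]
\le e^{-\lambda(\mu-\tau)L}\,
\mathbb E_{\rm ens}\bigl[e^{-\lambda(\Gamma_{\Data}(\widetilde K)-\mu L)}\bigr].
\]
Inserting Eq.~\eqref{eq:mgf-qip} bounds the right-hand side by
\[
\exp\bigl[-L\bigl(\lambda(\mu-\tau)-\lambda^2\nu/2\bigr)\bigr].
\]

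This bound holds for every admissible $\lambda$, and the left-hand side does not depend on $\lambda$. We may therefore take the infimum of the right-hand side over $0<\lambda\le\lambda_0$. Since $x\mapsto e^{-Lx}$ is decreasing and continuous, this infimum equals $\exp[-L\,I(\tau)]$, which is Eq.~\eqref{eq:tail-qip}. Two details need care. First, the supremum in Eq.~\eqref{eq:rate-qip} need not be attained. Passing to a limit along a maximizing sequence handles this, because each term of the sequence gives a valid upper bound. Second, $I(\tau)>0$ whenever $\tau<\mu$, since the objective is positive for small $\lambda$. The bound is therefore nontrivial, although positivity is not needed for the inequality itself.

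For the closed form, the function $\phi(\lambda)=\lambda(\mu-\tau)-\lambda^2\nu/2$ is a concave quadratic. Its stationary point is $\lambda^*=(\mu-\tau)/\nu$, which is positive because $\tau<\mu$ and $\nu>0$. The phrase ``the unconstrained optimizer is admissible'' means $\lambda^*\le\lambda_0$. In that case $I(\tau)=\phi(\lambda^*)=(\mu-\tau)^2/(2\nu)$.

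There is no real obstacle in this argument. The only points to state carefully are that $\Prb_{\rm ens}$ and $\mathbb E_{\rm ens}$ are taken under the same joint law as in Assumption~1, which includes the false-key sampling of Eq.~\eqref{eq:false-key-sampling} and the conditioning on $\Key\setminus\Good(\Data)\neq\emptyset$, and that the sign convention of $\lambda$ matches the one-sided (lower-tail) form of the moment condition.
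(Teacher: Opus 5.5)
Your proposal is correct and follows the paper's proof essentially verbatim. Both apply Markov's inequality to $\exp[-\lambda(\Gamma_{\Data}(\widetilde K)-\mu L)]$, insert the moment bound of Assumption~1, and optimize over admissible $\lambda$; your remarks on a possibly unattained supremum and on the case $\lambda^*=(\mu-\tau)/\nu\le\lambda_0$ spell out what the paper compresses into ``optimizing over $\lambda$''.
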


\begin{proof}
For any admissible $\lambda$, Markov's inequality gives
\begin{align}
\Prb_{\rm ens}[\Gamma_{\Data}(\widetilde K)\leq\tau L]
 &=\Prb_{\rm ens}[-(\Gamma_{\Data}(\widetilde K)-\mu L)\geq(\mu-\tau)L]\nonumber\\
&\leq e^{-\lambda(\mu-\tau)L}
  \mathbb E_{\rm ens}[e^{-\lambda(\Gamma_{\Data}(\widetilde K)-\mu L)}]\nonumber\\
&\leq\exp\left[-\lambda(\mu-\tau)L+\frac{\lambda^2\nu L}{2}\right].
\end{align}
Optimizing over $\lambda$ proves Eq.~\eqref{eq:tail-qip}.
\end{proof}

\begin{corollary}[Expected false-key count]
Let $N_{\rm false}(\tau)=|\{k\in\Key\setminus\Good(\Data):\Gamma_{\Data}(k)\leq\tau L\}|$ and set $q_\tau=\exp[-L I(\tau)]$. If, for every fixed $k$ with positive false-key probability,
\begin{equation}
    \Prb_{\rm pub}\left[\Gamma_{\Data}(k)\leq\tau L\mid k\notin\Good(\Data)\right]
    \leq q_\tau,
    \label{eq:uniform-fixed-key-tail-qip}
\end{equation}
then
\begin{equation}
    \mathbb E_{\rm pub}[N_{\rm false}(\tau)]
    \leq 2^n q_\tau.
    \label{eq:false-count-qip}
\end{equation}
If $|\Good(\Data)|=g$ almost surely and the tail bound of Proposition~1 holds for the uniformly sampled false key in Eq.~\eqref{eq:false-key-sampling}, then the sharper identity
\begin{equation}
    \mathbb E_{\rm pub}[N_{\rm false}(\tau)]
    =(2^n-g)\Prb_{\rm ens}\left[\Gamma_{\Data}(\widetilde K)\leq\tau L\right]
    \leq (2^n-g)q_\tau
    \label{eq:false-count-fixed-good-qip}
\end{equation}
holds. In particular, $LI(\tau)>n\ln 2$ makes the first bound smaller than one. Without either the fixed-index uniformity in Eq.~\eqref{eq:uniform-fixed-key-tail-qip} or a deterministic false-key count, an average bound for one uniformly sampled false key cannot be multiplied by a random number of false keys.
\end{corollary}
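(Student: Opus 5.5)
We write $N_{\rm false}(\tau)$ as a sum of indicators over the whole key space and apply linearity of expectation. For each $k\in\Key$, let
\[
Z_k=\mathbf 1\{k\notin\Good(\Data)\}\,\mathbf 1\{\Gamma_{\Data}(k)\le\tau L\},
\]
so that $N_{\rm false}(\tau)=\sum_{k\in\Key}Z_k$ and $\mathbb E_{\rm pub}[N_{\rm false}(\tau)]=\sum_k\Prb_{\rm pub}[k\notin\Good,\ \Gamma_{\Data}(k)\le\tau L]$. Each summand factors as $\Prb_{\rm pub}[k\notin\Good]\cdot\Prb_{\rm pub}[\Gamma_{\Data}(k)\le\tau L\mid k\notin\Good]$. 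If $\Prb_{\rm pub}[k\notin\Good]=0$, the term is zero. Otherwise, hypothesis~\eqref{eq:uniform-fixed-key-tail-qip} bounds it by $\Prb_{\rm pub}[k\notin\Good]\,q_\tau\le q_\tau$. Summing over the $2^n$ keys gives~\eqref{eq:false-count-qip}. The sharper sum $\sum_k\Prb_{\rm pub}[k\notin\Good]=\mathbb E_{\rm pub}[2^n-|\Good|]$ also holds and is useful below.

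For the fixed-good-set identity, we compute $\Prb_{\rm ens}[\Gamma_{\Data}(\widetilde K)\le\tau L]$ by conditioning on $\Data$ and using the uniform false-key sampling~\eqref{eq:false-key-sampling}:
\[
\Prb_{\rm ens}[\Gamma_{\Data}(\widetilde K)\le\tau L]
=\mathbb E_{\rm pub}\!\left[\frac{1}{|\Key\setminus\Good(\Data)|}\sum_{k\notin\Good(\Data)}\mathbf 1\{\Gamma_{\Data}(k)\le\tau L\}\right]
=\mathbb E_{\rm pub}\!\left[\frac{N_{\rm false}(\tau)}{2^n-|\Good(\Data)|}\right].
\]
When $|\Good(\Data)|=g$ almost surely, the denominator is the deterministic constant $2^n-g$. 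This constant is positive, because the ensemble is restricted to instances with a nonempty complement of $\Good$. Pulling it out gives the equality in~\eqref{eq:false-count-fixed-good-qip}. Proposition~1 then supplies $\Prb_{\rm ens}[\cdot]\le q_\tau$, which gives the inequality.

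For the final remarks, $2^nq_\tau<1$ is equivalent to $LI(\tau)>n\ln2$ after taking logarithms, since $q_\tau=\exp[-LI(\tau)]$. The closing caveat follows from the same display. When $|\Good(\Data)|$ is random, $\mathbb E[N/(2^n-|\Good|)]$ does not factor into $\mathbb E[N]/\mathbb E[2^n-|\Good|]$. Keys with a large false count may coincide with instances of a particular size, so a per-sample bound on $\widetilde K$ cannot simply be multiplied by $2^n-|\Good|$. I would note this briefly rather than construct an explicit counterexample.

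The only delicate point is measure-theoretic bookkeeping. In the first part, the conditional probability must be well defined, which is why the hypothesis is restricted to keys with positive false-key probability. In the second part, the normalization by the random set size must be exactly the one induced by~\eqref{eq:false-key-sampling}. Beyond that, the argument is linearity of expectation and the tower property.
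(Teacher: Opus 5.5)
Your proposal is correct and follows the paper's proof. For the first bound, both write $N_{\rm false}(\tau)$ as a sum of indicators over $k\in\Key$, multiply the fixed-index conditional bound by $\Prb_{\rm pub}[k\notin\Good(\Data)]\le 1$, and apply linearity of expectation; for the identity, both condition on $\Data$ and use uniform sampling over the deterministic $2^n-g$ false keys. Your additional checks (the threshold $LI(\tau)>n\ln 2$ and the non-factoring of $\mathbb E_{\rm pub}[N_{\rm false}(\tau)/(2^n-|\Good(\Data)|)]$ when $|\Good(\Data)|$ is random) are consistent with the statement, which the paper does not argue further.
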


\begin{proof}
Write $N_{\rm false}(\tau)$ as a sum of indicators over $k\in\Key$. For each term, multiply the conditional bound in Eq.~\eqref{eq:uniform-fixed-key-tail-qip} by $\Prb_{\rm pub}[k\notin\Good(\Data)]\leq1$, and then use linearity of expectation to obtain Eq.~\eqref{eq:false-count-qip}. If $|\Good|=g$ is deterministic, conditioning on $\Data$ and sampling $\widetilde K$ uniformly over the $2^n-g$ false keys gives the identity in Eq.~\eqref{eq:false-count-fixed-good-qip}.
\end{proof}

\subsection{Hamming-shell specialization}

Suppose $B$ selected residual bits for a false key are unbiased and independent in an idealized null model. Then
\begin{equation}
    W=|\delta(\widetilde K)|_1\sim\operatorname{Binomial}(B,1/2).
\end{equation}
For $0\leq\rho<1/2$,
\begin{equation}
    \Prb[W\leq\rho B]
    =2^{-B}\sum_{i=0}^{\lfloor\rho B\rfloor}\binom{B}{i}
    \leq2^{-B[1-H_2(\rho)]},
    \label{eq:hamming-qip}
\end{equation}
where
\begin{equation}
    H_2(\rho)=-\rho\log_2\rho-(1-\rho)\log_2(1-\rho)
\end{equation}
is the binary entropy, with the usual convention $H_2(0)=0$. This bound concerns a shell event, not the full smooth reward used by every numerical configuration. It illustrates the source of exponential suppression in the idealized model; dependence and cipher-specific algebraic correlations must be measured on held-out data.

\subsection{From landscape separation to QAOA probability}

The lower-tail result concerns the diagnostic landscape and does not, by itself, establish that a finite-depth QAOA optimizer reaches its low-energy region. For a fixed public instance and $\eta>0$, consider the idealized nonunitary filter
\begin{equation}
    \ket{\phi_\eta}=
    \frac{1}{\sqrt{Z_\eta}}
    \sum_{k\in\Key}e^{-\eta\Gamma_{\Data}(k)/2}\ket{k}.
    \label{eq:filter-qip}
\end{equation}
Here $Z_\eta=\sum_{k\in\Key}e^{-\eta\Gamma_{\Data}(k)}$ normalizes the state.
\begin{proposition}[Ideal-filter bound]
For a fixed public instance, if every $g\in\Good$ satisfies $\Gamma_{\Data}(g)\leq0$ and every false key satisfies $\Gamma_{\Data}(k)\geq\Delta>0$, then
\begin{equation}
    p_{\Good}(\phi_\eta)
    \geq
    \frac{|\Good|}{|\Good|+(2^n-|\Good|)e^{-\eta\Delta}}.
    \label{eq:filter-bound-qip}
\end{equation}
\end{proposition}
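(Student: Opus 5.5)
The plan is to compute $p_{\Good}(\phi_\eta)$ directly from the amplitudes in Eq.~\eqref{eq:filter-qip} and then bound the numerator and the false-key part of $Z_\eta$ in opposite directions. Since the amplitudes of $\ket{\phi_\eta}$ are $e^{-\eta\Gamma_{\Data}(k)/2}/\sqrt{Z_\eta}$, Eq.~\eqref{eq:pgood-qip} gives
\begin{equation*}
p_{\Good}(\phi_\eta)=\frac{A}{A+F},\qquad
A=\sum_{g\in\Good}e^{-\eta\Gamma_{\Data}(g)},\qquad
F=\sum_{k\in\Key\setminus\Good}e^{-\eta\Gamma_{\Data}(k)}.
\end{equation*}
Here $Z_\eta=A+F$, and $A>0$ because $\Good$ is nonempty, which was noted after Eq.~\eqref{eq:good-set}.

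Next I bound the two sums using the hypotheses and $\eta>0$. Every $g\in\Good$ has $\Gamma_{\Data}(g)\le0$, so each term of $A$ is at least $1$, and hence $A\ge|\Good|$. Every false key has $\Gamma_{\Data}(k)\ge\Delta$, so each term of $F$ is at most $e^{-\eta\Delta}$, and hence $F\le(2^n-|\Good|)e^{-\eta\Delta}$.

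Finally I use monotonicity. The map $(A,F)\mapsto A/(A+F)=1/(1+F/A)$ is nondecreasing in $A$ and nonincreasing in $F$ on $A>0$, $F\ge0$. Replacing $A$ by its lower bound and $F$ by its upper bound therefore yields Eq.~\eqref{eq:filter-bound-qip}. If $\Key\setminus\Good$ is empty, then $F=0$ and both sides equal $1$.

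None of these steps is a real obstacle. The only point needing care is applying the two bounds in the correct directions inside the monotone ratio; the numerator and denominator should not be bounded separately with inconsistent directions. The normalization $1/\sqrt{Z_\eta}$ and any global phase drop out of the ratio.
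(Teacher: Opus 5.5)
Your proof is correct and follows the paper's argument: write $p_{\Good}(\phi_\eta)=A/(A+F)$, use the hypotheses to bound $A\geq|\Good|$ and $F\leq(2^n-|\Good|)e^{-\eta\Delta}$, and conclude by the monotonicity of the ratio. Your remarks on $A>0$ and on the case of an empty false-key set are small refinements that the paper leaves implicit.
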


\begin{proof}
Let $A=\sum_{g\in\Good}e^{-\eta\Gamma_{\Data}(g)}$ and $B=\sum_{k\notin\Good}e^{-\eta\Gamma_{\Data}(k)}$. The hypotheses give $A\geq|\Good|$ and $B\leq(2^n-|\Good|)e^{-\eta\Delta}$. Since $p_{\Good}(\phi_\eta)=A/(A+B)$ is increasing in $A$ and decreasing in $B$, Eq.~\eqref{eq:filter-bound-qip} follows.
\end{proof}

The global score-gap condition is stronger than the average lower-tail assumption; the latter does not imply that no false key lies below the threshold. The filter is a reference nonunitary state used for analysis, not an additional operation claimed to be implemented by the QAOA circuit.

\begin{proposition}[Variational transfer]
Let $\Pi_{\Good}$ project onto the public-consistent key space. After choosing the irrelevant global phase, if a trained QAOA state satisfies
\begin{equation}
    \|\ket{\psi(\Theta)}-\ket{\phi_\eta}\|_2\leq\varepsilon,
\end{equation}
then
\begin{equation}
    p_{\Good}(\Theta)
    \geq
    \left[\max\left\{0,\sqrt{p_{\Good}(\phi_\eta)}-\varepsilon\right\}\right]^2.
    \label{eq:transfer-qip}
\end{equation}
\end{proposition}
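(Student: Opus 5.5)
The plan is to read $p_{\Good}$ as a squared norm and apply the triangle inequality once. For any normalized state $\ket{\chi}$ on the key register, Eq.~\eqref{eq:pgood-qip} gives $p_{\Good}(\chi)=\|\Pi_{\Good}\ket{\chi}\|_2^2$, because $\Pi_{\Good}$ is the diagonal projector onto $\mathrm{span}\{\ket{g}:g\in\Good\}$. A global phase does not change this quantity. We may therefore fix the phase of $\ket{\psi(\Theta)}$ so that the hypothesis $\|\ket{\psi(\Theta)}-\ket{\phi_\eta}\|_2\leq\varepsilon$ holds, without altering $p_{\Good}(\Theta)$.

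The key step uses that an orthogonal projector has operator norm at most one. Hence
\begin{equation*}
\|\Pi_{\Good}\ket{\psi(\Theta)}-\Pi_{\Good}\ket{\phi_\eta}\|_2\leq\|\ket{\psi(\Theta)}-\ket{\phi_\eta}\|_2\leq\varepsilon .
\end{equation*}
The reverse triangle inequality then yields
\begin{equation*}
\sqrt{p_{\Good}(\Theta)}=\|\Pi_{\Good}\ket{\psi(\Theta)}\|_2\geq\|\Pi_{\Good}\ket{\phi_\eta}\|_2-\varepsilon=\sqrt{p_{\Good}(\phi_\eta)}-\varepsilon .
\end{equation*}

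A norm is nonnegative, so we also have $\sqrt{p_{\Good}(\Theta)}\geq\max\{0,\sqrt{p_{\Good}(\phi_\eta)}-\varepsilon\}$. Both sides are nonnegative, and squaring a nonnegative inequality preserves it, which gives Eq.~\eqref{eq:transfer-qip}.

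No step is a real obstacle. The only care needed is to take the maximum with zero before squaring, since otherwise a large $\varepsilon$ would make the squared bound meaningless. One should also note that $\ket{\phi_\eta}$ is normalized by $Z_\eta$, so $p_{\Good}(\phi_\eta)$ is a genuine probability. If desired, this bound can be combined with the Ideal-filter bound, Eq.~\eqref{eq:filter-bound-qip}, to give an explicit lower bound in terms of $\Delta$, $\eta$, $n$ and $\varepsilon$.
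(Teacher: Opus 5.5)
Your proof is correct and follows the paper's own argument: $\Pi_{\Good}$ is a contraction, so $\bigl|\|\Pi_{\Good}\ket{\psi(\Theta)}\|_2-\|\Pi_{\Good}\ket{\phi_\eta}\|_2\bigr|\leq\varepsilon$, and you then take the positive part and square. Your observation that the global phase leaves $p_{\Good}(\Theta)$ unchanged spells out a point the paper mentions only in the statement.
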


\begin{proof}
The projector is a contraction, so
\begin{equation}
\left|\|\Pi_{\Good}\ket{\psi(\Theta)}\|_2-
\|\Pi_{\Good}\ket{\phi_\eta}\|_2\right|
\leq\|\ket{\psi(\Theta)}-\ket{\phi_\eta}\|_2\leq\varepsilon.
\end{equation}
Taking the positive part and squaring gives Eq.~\eqref{eq:transfer-qip}.
\end{proof}

The two conditional results address different layers of the construction. Equation~\eqref{eq:tail-qip} controls the frequency of low-score false keys in an ensemble, while Eq.~\eqref{eq:transfer-qip} quantifies the consequence of approximating a particular reference filter. Neither result is a finite-depth QAOA convergence theorem. The measured finite-depth probability remains an empirical quantity. Moreover, the analysis score $e_{\Data}$ is a fixed diagnostic aggregate; the learned component angles need not be equal, positive, or proportional to its coefficients. No identification between the reference filter and the trained ansatz is assumed.

\section{Circuit Resources and Model Assumptions}
\label{sec:resources}

Let $w$ denote the number of workspace qubits required by one diagnostic computation. The physical circuit uses $n+w$ logical qubits, with workspace reuse across phase modules. Its width is not $2^n$. For fixed precision and a reversible implementation of the cipher, one diagnostic sandwich has gate cost proportional to the cost of computing the cipher and the diagnostic, plus the cost of a small counter, comparator, or controlled phase. A $p$-layer schedule therefore has cost proportional to the number of selected diagnostics and layers, up to the usual synthesis and routing overhead.

More explicitly, if $C(f_r)$ denotes the reversible gate cost of computing diagnostic $f_r$, $C(P_r)$ the cost of its controlled phase, and $C_M$ the cost of one mixer layer, then a schedule with component set $\mathcal R_\ell$ has schematic gate cost
\begin{equation}
    C_{\rm circ}
    =
    O\left(
    \sum_{\ell}
    \left[
    \sum_{r\in\mathcal R_\ell}
    \bigl(2C(f_r)+C(P_r)\bigr)
    +C_M
    \right]\right),
    \label{eq:circuit-cost-qip}
\end{equation}
with width $n+\max_r w_r$ when workspaces are reused. The constants depend on reversible arithmetic, phase precision, and hardware routing. Thus the construction avoids a key-indexed table in circuit width, but this circuit-size statement is not a polynomial-time key-recovery theorem. Training may require many circuit evaluations, and direct measurement requires $O(1/p_{\Good})$ samples in expectation before a public-consistent key is observed.

The reduced benchmark and the physical model have different roles:
\begin{center}
\begin{tabular}{@{}>{\raggedright\arraybackslash}p{0.25\linewidth}>{\raggedright\arraybackslash}p{0.68\linewidth}@{}}
\toprule
model & role \\
\midrule
compressed 16-bit replay & exact audit of the induced key-register dynamics; enumerates $2^{16}$ keys classically to construct phase arrays \\
reversible diagnostic circuit & physical specification using S-box permutations, linear maps, XORs, counters, controlled phases, and uncomputation; no key-space phase table \\
gate-level S-DES check & small-width compilation test of the compute--phase--uncompute pattern, not a full-scale S-AES simulation \\
\bottomrule
\end{tabular}
\end{center}

\section{Numerical and Gate-Level Validation}
\label{sec:experiments}

\subsection{Protocol and fairness conditions}

The principal S-AES benchmark fixes
\begin{equation}
    (P_1,P_2,P_3)=(0x6F6B,0x1234,0xBEEF)
\end{equation}
and samples ten independent random 16-bit keys using the reproducible seed schedule generated from base seed $20260901$. Ciphertexts are generated from those keys and then treated as public. The Hybrid DDT and residual energy maps are fixed by the public S-box and public pairs. Adam is used to train the variational angles. No run is removed on the basis of its final probability, and the correct key is not supplied to the training objective.

The primary configuration uses the Hybrid staged schedule in Eq.~\eqref{eq:main-schedule-qip}: twelve Hybrid-DDT layers, three residual layers, and three global residual Hamming-shell layers. The DDT, residual, and shell optimization budgets are $1200$, $800$, and $1800$ iterations, respectively, with four warm-start restarts. The final probability is evaluated from the key-register state and the public-consistency set in Eq.~\eqref{eq:good-set}. A previously completed pairwise two-pair control is retained below as a reference for the effect of additional public data; it is not used as the primary Hybrid estimate.

\subsection{Ten-instance S-AES benchmark}

\begin{table}[t]
\centering
\caption{Ten independent random three-pair 16-bit S-AES instances for the principal Hybrid-DDT schedule. The final probability is the public-consistent-key probability evaluated on the key register; the last column gives the largest false-key probability among the reported top candidates.}
\label{tab:ten-qip}
\begin{tabular}{@{}rccc@{}}
\toprule
run & key & $p_{\Good}$ & largest false \\
\midrule
1 & $0x875F$ & 0.400682 & 0.006407 \\
2 & $0xFABF$ & 0.640035 & 0.008723 \\
3 & $0xB18A$ & 0.393665 & 0.005907 \\
4 & $0x1BEC$ & 0.424690 & 0.004540 \\
5 & $0xDF37$ & 0.732872 & 0.006388 \\
6 & $0x44F4$ & 0.540367 & 0.005174 \\
7 & $0xBA1F$ & 0.419556 & 0.005868 \\
8 & $0x57A6$ & 0.388892 & 0.003169 \\
9 & $0x3D20$ & 0.586158 & 0.004866 \\
10 & $0xD5BD$ & 0.770937 & 0.004824 \\
\midrule
mean & -- & 0.529785 & 0.005587 \\
\bottomrule
\end{tabular}
\end{table}

The correct key is ranked first in all ten runs. The correct-key probability ranges from $0.388892$ to $0.770937$, with mean $0.529785$ and population standard deviation $0.138956$. The largest false-key probability observed over the ten runs is $0.008723$, while its mean is $0.005587$. These values are concentration measurements on a reduced key register, not claims that a physical full-AES experiment succeeds with the same probability.

\subsection{Dependence on the number of public pairs}

\begin{table}[t]
\centering
\caption{Pairwise-DDT reference comparison between two- and three-pair S-AES inputs. The same ten random-key and optimizer-seed pairs are used in both conditions; the table is not the primary Hybrid benchmark.}
\label{tab:pairs-qip}
\scriptsize
\resizebox{\linewidth}{!}{%
\begin{tabular}{@{}lccccc@{}}
\toprule
public pairs & DDT components & mean $p_{\Good}$ & standard deviation & mean largest false & rank first \\
\midrule
$m=2$ & 4 & 0.208281 & 0.092611 & 0.017134 & $10/10$ \\
$m=3$ & 12 & 0.467769 & 0.193166 & 0.006050 & $10/10$ \\
\bottomrule
\end{tabular}
}
\end{table}

Within this pairwise reference, the three-pair mean is about $2.25$ times the two-pair mean, while the mean largest false-key probability decreases from $0.017134$ to $0.006050$. The increase is consistent with the intended qualitative role of additional public constraints: three public pairs provide three pairwise differences and twelve nibble-level DDT components, compared with one pairwise difference and four components for two pairs. It is an empirical comparison, not an unconditional theorem about all plaintext distributions, and it should not be conflated with the independent Hybrid estimate in Table~\ref{tab:ten-qip}.

\subsection{No-NPP pairwise and hybrid DDT ablation}

We also performed a separate matched ten-instance ablation with three public pairs. Both variants used the same fixed $8$-layer DDT budget, $3$ residual layers, no residual-shell stage, one Adam restart, and identical key and optimizer seeds within each pair. Only the DDT structure was changed: pairwise uses the twelve pairwise nibble components, whereas Hybrid augments them with four synchronized three-pair components. This is a no-NPP diagnostic ablation and is not the 18-layer principal benchmark. The Qiskit replay statistics are
\begin{table}[t]
\centering
\caption{Matched ten-instance comparison of pairwise and hybrid DDT structures. The reported probability is the public-consistent-key probability from the effective key-register replay.}
\label{tab:hybrid-qip}
\scriptsize
\resizebox{\linewidth}{!}{%
\begin{tabular}{@{}lccccc@{}}
\toprule
DDT structure & mean & minimum & maximum & standard deviation & correct key ranked first \\
\midrule
pairwise & 0.166258 & 0.120971 & 0.215272 & 0.037025 & $10/10$ \\
hybrid & 0.219947 & 0.171725 & 0.328364 & 0.052859 & $10/10$ \\
\midrule
hybrid $-$ pairwise & 0.053689 & $-0.041160$ & 0.171477 & 0.063420 & $7/10$ wins \\
\bottomrule
\end{tabular}
}
\end{table}

The hybrid construction has a mean absolute improvement of $0.053689$ (approximately $32.3\%$ relative to the pairwise mean) and wins seven of the ten paired trials. It is not uniformly better: pairwise wins three trials. The largest false-key probability in the reported sorted lists is at most $0.011384$ for pairwise and $0.008952$ for hybrid. These results support a distribution-level advantage for this fixed benchmark and budget; they do not establish dominance for other instances, plaintext ensembles, or optimized layer allocations.

\subsection{Gate-level S-DES validation}

We separately compile the diagnostic-oracle pattern for S-DES at gate level~\cite{SchaeferSDES}. The validation uses a 10-qubit key register, an 8-qubit cipher-work register, and one flag qubit, for 19 logical qubits in total. The S-box is realized as a reversible Boolean permutation and the public diagnostic is inserted through compute, controlled phase, and uncompute operations. No array indexed by the $2^{10}$ possible keys is loaded into the circuit. This experiment validates the reversible construction pattern and its Qiskit Aer execution; it is not used to infer the 16-bit S-AES concentration statistics.

\section{Validity and Limitations}
\label{sec:limits}

The deterministic completeness lemma is exact for the stated S-AES maps. The lower-tail proposition is conditional. A finite data set cannot prove Assumption~1; it can only support an empirical assessment by estimating the diagnostic-gap parameters $(\mu,\nu)$ on a calibration set and evaluating the tail prediction on held-out keys and held-out public instances. The training angles, energy maps, layer counts, and optimizer budgets must be fixed before the held-out evaluation. The current ten-run benchmark is an end-to-end performance sample; it should not be interpreted as a fitted proof of Assumption~1.

The following limitations define the claim precisely.
\begin{enumerate}[leftmargin=*]
    \item The cipher is 16-bit S-AES, not the 128-bit AES standard. The result demonstrates a mechanism on a controlled reduced cipher.
    \item The 16-bit numerical replay classically enumerates all candidate keys to construct effective diagonal phase arrays. The intended physical circuit uses reversible computation and does not require this enumeration, but that full S-AES gate-level circuit is not simulated here.
    \item The separation proposition does not prove finite-depth QAOA convergence. Optimizer behavior, circuit noise, routing, and shot cost remain empirical.
    \item The sub-Gaussian false-key gap assumption may fail for other ciphers, plaintext ensembles, or highly structured false-key families. Such failure is detectable through held-out tail tests.
    \item The method does not claim a practical attack on full AES.
\end{enumerate}

\section{Conclusion}
\label{sec:conclusion}

We have presented a structure-aware variational framework for quantum cryptanalysis of reduced block ciphers. The framework uses public S-box differential information and candidate-key-dependent inverse diagnostics to create reversible phase separators on a key register. Its theory separates deterministic completeness, statistical false-key separation, and finite-depth variational preparation. Its implementation separates the physically meaningful compute--phase--uncompute circuit from the compressed key-register replay used to audit a 16-bit benchmark.

On ten random three-pair S-AES instances, the structured schedule produced substantial key-register concentration and ranked the correct key first in every run. The result should be interpreted as evidence for a cipher-structure-to-quantum-circuit design principle, not as a full-AES attack. The next test is a held-out diagnostic-separation study and a width-controlled gate-level implementation in which the reversible oracle, shot-based training, and public verification are all executed without key-space enumeration.

\section*{Acknowledgments}
This work was supported by the National Natural Science Foundation of China (Grant Nos. 61871120 and 62071240), the Natural Science Foundation of Jiangsu Province (Grant Nos. BK20191259 and BK20220804), the Innovation Program for Quantum Science and Technology (Grant No. 2021ZD0302901), the Six Talent Peaks Project of Jiangsu Province (Grant No. XYDXX-003), and the Jiangsu Funding Program for Excellent Postdoctoral Talent (Grant No. 2022ZB107).

\bibliographystyle{unsrt}
\bibliography{prl_qaoa_aes_refs}

\end{document}